\newcommand{\diff}{\mathop{}\!\mathrm{d}}
\newtheorem{prop}{Proposition}
\newcommand{\defeq}{\vcentcolon=}
\newcommand{\R}{\mathbb{R}}
\newcommand{\EcommentDone}[1]{}
\title{
Evolutionary algorithms as an alternative to backpropagation for supervised training of Biophysical Neural Networks and Neural ODEs}
\author[1,3]{James Hazelden}
\author[1,3]{Yuhan Helena Liu}
\author[1,2,3*]{Eli Shlizerman}
\author[1,2,3,4*]{Eric Shea-Brown}
\affil[1]{Applied Mathematics and Computational Neuroscience Center, University of Washington, Seattle, WA, 98195, United States}
\affil[2]{Electrical \& Computer Engineering, University of Washington, Seattle, WA, 98195, United States}
\affil[3]{Computational Neuroscience Center, University of Washington, Seattle, WA, 98195, United States}
\affil[4]{Physiology \& Biophysics, University of Washington, Seattle, WA, 98195, United States}
\affil[*]{\textbf{These authors share senior authorship}}
\runningauthor{Hazelden et al.}
\begin{document}

\maketitle

\begin{abstract}
Training networks consisting of biophysically accurate neuron models could allow for new insights into how brain circuits can organize and solve tasks. We begin by analyzing the extent to which the central algorithm for neural network learning -- stochastic gradient descent through backpropagation (BP) -- can be used to train such networks. 
We find that properties of biophysically based neural network models needed for accurate modelling such as stiffness, high nonlinearity and long evaluation timeframes relative to spike times makes BP unstable and divergent in a variety of cases.
To address these instabilities and inspired by recent work, we investigate the use of ``gradient-estimating'' evolutionary algorithms (EAs) for training biophysically based neural networks. 
We find that EAs have several advantages making them desirable over direct BP, including being forward-pass only, robust to noisy and rigid losses, allowing for discrete loss formulations, and potentially facilitating a more global exploration of parameters. 
We apply our method to train a recurrent network of Morris--Lecar neuron models on a stimulus integration and working memory task, and show how it can succeed in cases where direct BP is inapplicable. 
To expand on the viability of EAs in general, we apply them to a general neural ODE problem and a stiff neural ODE benchmark and find again that EAs can out-perform direct BP here, especially for the over-parameterized regime.
\textbf{Our findings suggest that biophysical neurons could provide useful benchmarks for testing the limits of BP-adjacent methods, and demonstrate the viability of EAs for training networks with complex components. }

\keywords{artificial neural networks, plasticity, biological learning algorithms, backpropagation, neural ordinary differential equations, Hodgkin--Huxley model, neurophysiology, spiking neural networks, evolutionary algorithms, gradient estimation}
\end{abstract}
\section{Introduction}

\EcommentDone{should we add ``and neural ODEs'' to the end of title?  think of ways to tie into main flow better ... bit of stretch but could say is step toward modeling more general bio processes / nets?}

The brain is able to compute an astonishing variety of tasks through the networked interaction of many spiking neurons.  Biophysical models for these neurons can demonstrate diverse ranges of behavior, described through, for example, a variety of bifurcation types, \textit{resonance} and \textit{integration} behavior, and bursting or tonic spiking \citep{izhikevich2004which, gerstner2014neuronal}. 
These mechanisms can be used in diverse ways to facilitate communication in the brain. 
For example, resonator neurons can ``multiplex'' input signals, tonic spiking can be utilized to communicate through rate-coding, and the cortical neurons exhibit spiking to bursting transitions during sleep \citep{Izhikevich2007, steriade2001sleep}.  
Understanding how these biophysical models can be coupled together to solve information processing tasks, even primitive, is a fascinating challenge that pervades computational and experimental neuroscience. 

In this work, we ask to what extent iterative supervised learning algorithms can train networks of biophysical neuron models to solve tasks.  
Stochastic gradient descent through backpropagation (BP) is the primary tool that has enabled the massive success of supervised learning for artificial neural networks \cite{rumelhart1986learning}.
This approach computes the required gradients by reversing (backpropagating) the activity that flowed through the network during a forward pass. This typically relies on the neurons implementing differentiable operations with well-defined derivatives, and assembles an  explicit composition (the derivative of operations) through the network. A common trend has involved extending backpropgation to networks with increasingly diverse components, leading to fruitful new directions. Some examples include spiking neural networks \citep{zenke2018surrogate}, neural ODEs \citep{chen2018neural} and binary neural networks \citep{rastegari2016xnornet}.

Biophysical models, being multi-dimensional nonlinear dynamical systems with widely disparate timescales, exhibit numerical stiffness, instability and rapid, almost discontinuous, dynamics. The major source of these features is the underlying ``spikes,'' in which voltage and sometimes allied state variables change extremely quickly.
Such dynamics pose challenges for BP, as it accumulates gradients over time and generally requires well-behaved, differentiable states at every timestep that do not blow up. 

These numerical challenges 
may be amplified when multiple of these models interact in a network. 
Further accentuating this problem, many settings relevant to neurobiology carry complex or discrete loss functions, and are evaluated over long timeframes (seconds or even minutes) in which very many spikes occur; furthermore, models could include sources of stochasticity.
For such situations, it has recently been observed that BP can become less reliable due to dependence on the precise timing of backpropagated derivatives, sensitivity to instabilities that these models exhibit, and gradients that are not well defined in the case of non-differentiable losses or neuron models \citep{yingbo2021comparison, kim2021stiff}. In this work, we begin by asking \textbf{whether BP is applicable for networks of neurons that incorporate biophysically based neuron models,} i.e. neurons expressed by 
Hodgkin--Huxley differential equations or their simplified variants, such as Morris--Lecar, FitzHugh Nagumo equations, etc \citep{hodgkin1952quantative, morris1981voltage, fitzhugh1955mathematical}. 
To this end, our work first focuses on investigating the applicability of BP and its variants (BPTT and the neural ODEs approach, explained in Methods section below) in this setting \citep{chen2018neural}.

To investigate the applicability and limitations of BP, we introduce two new simple but useful benchmarks: 
\begin{enumerate}
    \item A single Morris--Lecar \citep{morris1981voltage} neuron in which the external current is a learnable parameter and the goal is to minimize spiking of the neuron. Section \ref{casestudy} outlines the motivation and details of this choice.
    \item A coupled Morris--Lecar network applied to a common working memory and stimulus integration task, as in Section \ref{integrator_task}.
\end{enumerate}
\noindent{We illustrate certain settings in which BP succeeds in training networks to solve these tasks, and other regimes in which it is unreliable and extremely unstable.  We further note how it can be prohibitively memory intensive for realistically long timeframes.}

To address the shortcomings of BP for these situations, we follow the direction of much recent exciting research on the surprising viability of evolutionary algorithms (EAs) as an alternative to BP-based approaches \citep{such2017deep, bai2023evolutionary, salimans2017evolution}. 
In particular, we describe how evolutionary strategies (ES), used previously by, e.g., Salimans, et al. \citep{salimans2017evolution}, can be adapted to the problem of supervised training for biophysical neural networks (BNNs) and neural ODEs in general. 
Inspired by the stochasticity and distributed processing observed in biological systems, 
ES leverage Monte Carlo gradient estimation and "smoothing" of the loss landscape. 
In particular, ES first introduce a new ``smoothed'' loss function that is computed through the expected value of the original loss under variations in the parameters of the network according to a chosen distribution. 
This modification naturally makes ES more robust for stiff problems since it considers trends in a neighborhood for gradient computation, instead of at a singular point. 
As we describe in more depth below (Section \ref{derivation}), ES permits a gradient estimate based on the \textit{log-trick}, allowing for efficient gradient computation using minimal samples and scaling well for large numbers of parameters \citep{williams1992simple, salimans2017evolution}. 
We find that evolutionary algorithms are well suited for training neural networks with complex components:
they naturally filter noise, are robust to stiff biological problems and are forward-pass only, alleviating some of the challenges that BP faces related to exploding gradients and non-differentiability.

Here, we evaluate the applicability of ES (specifically the evolutionary strategy described) for supervised learning in the benchmark settings (1) and (2) above, and contrast it with BP.  
In some cases, BP works best and converges efficiently, while in others BP diverges or is constrained by memory problems; ES often continue to perform in these cases. 
We believe that our results on these two benchmarks by provide theoretical insights into when BP can be used and the benefits of ES as an alternative to BP. 


Finally, we analyze the applicability of ES in the broader context of neural ODEs (the ``continuous time'' analogue of recurrent neural networks, where individual components are given by solving systems of ODEs). \EcommentDone{define what we mean by this}. 
We first show that ES can be effective on a simple cubic oscillator neural ODE and, compared to BP, can achieve faster convergence to a lower loss in overparameterized settings.
We further apply to the standard ``ROBER'' stiff ODE problem, in which BP has been shown to be unstable \citep{robertson1967, kim2021stiff}. 
We find that ES can train the neural ODE in this situation without having to adjust the loss to account for the varied sources of stiffness. 


In sum, our \textbf{main contributions} below are as follows:
\begin{itemize}
    \item The introduction of two new simple testbeds for evaluating the applicability of BP and other supervised learning approaches for training biophysical neuronal models individually and when coupled together into networks. 
    \item We apply BP to these two testbeds and determine where it can be unstable, giving corrupted gradients, and what causes this instability (long timeframes, spike time resolution in networks, and exploding gradients). Our findings are encapsulated in Proposition \ref{prop1} and \ref{prop2}.
    \item We describe and apply evolutionary strategies (ES) in these settings and demonstrate that they are a more reliable alternative to BP with desirable properties (noise filtering, no dependence on forward pass or differentiability and robustness to rapid changes in loss). 
    \item We also apply ES to more generic neural ODE problems: a cubic oscillator, in which we find that ES can outperform BP when the network is over-parameterized, and a standard stiff ODE problem, where ES appear to be more stable than BP. These results indicate a need for further investigation into using ES for neural ODEs. 
\end{itemize}

\subsection{Related Work}

\label{sec:relatedwork}

\textbf{Biophysical Neural Computation} A central pillar of computational neuroscience is understanding the dynamics and computation that arise in circuits with biophysically accurate neuronal components and synapses.
Within it, direct supervised learning approaches for training of biophysical neuronal networks have largely been focused on reduced neuron models such as integrate-and-fire, resonate-and-fire, and the Izhikevich neuron. Prior work demonstrates that these can be trained with the ``surrogate gradient'' approach together with direct BP \citep{zenke2018surrogate,alkhamissi2021deep,bellec2020solution}. However, efficiently training Hodgkin--Huxley and Morris--Lecar models in networks remains unresolved. The Hodgkin--Huxley system, for example, requires three orders of magnitude more operations per ODE step than integrate-and-fire \cite{izhikevich2004which}.  Moreover, it has four state variables that undergo complex and variable dynamics: its various parameter settings can induce resonance, bifurcation, and bistability.  This likely contributes to the entire system being more difficult to train. On the other hand, this dynamical complexity is of high interest~\cite{winston2023heterogeneity,perez2021neural,renart2003robust,billeh2020systematic,litwin2016inhibitory,liu2020excitation,zeldenrust2021efficient,whittington2023disentanglement,salaj2021spike,stockl2021structure}, as it could enable network computation through mechanisms such as multiplexing and bursting that are not possible with simplified integrate-and-fire models \citep{Izhikevich2007}.

\textbf{Evolutionary Algorithms for Deep Learning} There has been a newfound interest in adapting and using evolutionary algorithms, specifically ``evolutionary strategies'' (ES) in deep and reinforcement learning (RL) contexts. Salimans et al. found that evolutionary strategies in which the gradient in descent is estimated approximately by local sampling can outperform standard RL in terms of convergence versus runtime \citep{salimans2017evolution}. Evolutionary approaches have also been merged with RL to create hybrid learning rules \citep{bai2023evolutionary}. It has been noted, in particular, that ES can be very useful for learning problems where the ``loss landscape'' being traversed is very noisy or discrete \citep{li2023noise}. ES can also effectively avoid local minima, since they are unrestricted in their exploration rule. The evolutionary strategy used in this work is a specific type of evolutionary algorithm. It is more broadly referred to as a ``score function estimator'' in machine learning theory. Score function estimators estimators are well studied as an approach for approximating gradients of expected values \citep{mohamed2020monte}. Less work has investigated applications of ES to direct supervised learning problems, such as those encountered with artificial neural networks, and to the best of our knowledge there has been minimal work investigating the contexts we consider here: biophysical neural networks and, more generally, neural ODEs. In the latter case, there is work using estimators such as finite difference to compute gradients, but as far as we are aware, evolutionary strategies have not been well studied for neural ODEs. 


\textbf{Backpropagation Alternatives} The context in which we apply our method naturally prompts questions about the biological plausibility of ES methods. In particular, if the ES is a viable alternative to BP for training biophysical neural networks, it may not be out of the question that they could be implemented in real brains.  In this work, we consider the use of the evolutionary strategy (ES) of Salimans et al. as a gradient estimator and backpropagation alternative. ES is based on random, small, perturbations to the synaptic connections and quantification of the change in the final loss. Certain strategies only require a single such evaluation to update the network connection weights, and ES can fundamentally function in the same way (although the variance could be high; see theory below in Section \ref{derivation}). In practice, ES seem to require many samples to get good gradient estimates that are low in variance currently, a feature counting against the biological plausibility when contrasted with simpler evolutionary approaches such as weight perturbation.

The theory and implementation of biologically plausible mechanisms that could facilitate BP in the brain is an active and growing area of research.  Overall, the implementation of BP 
in the brain remains a topic of debate~\cite{diederich1987learning,lillicrap2020backpropagation,richards2019deep,scellier2017equilibrium,hinton2022forward,laborieux2022holomorphic,meulemans2022least}. Investigating biologically plausible learning rules can lead to a better understanding of how the brain processes information and learns, and could potentially inform the development of new artificial neural network training methods. Such investigations have already yielded interesting results in AI-inspired modeling in neuroscience~\cite{zador2023catalyzing,richards2019deep,sorscher2023unified,hennequin2018dynamical,vyas2020computation,perich2021inferring,yang2019task} and neuromorphic computing~\cite{schuman2022opportunities,strubell2020energy,covi2021adaptive,cramer2022surrogate,boahen2022dendrocentric,roy2019towards}. 
Many of these approaches attempt to approximate BP through neural mechanisms, e.g., by truncating and approximating intermediate derivatives in BP~\cite{murray2019local,bellec2020solution,liu2021cell,liu2022biologically,marschall2020unified,roelfsema2018control,payeur2020burst,sacramento2018dendritic}.  Other work uses alternative learning rules (e.g., FORCE learning) or hybrid network architecture to train networks of spiking neurons~\cite{sussillo2009generating,depasquale2023centrality}. 
However, ES differs from these approaches in that it is formulated differently to BP through the smoothed loss and estimation.  A consequence is that there are different cases where ES can be comparable or even show performance advantages compared with BP, e.g., for the type of stiff computation problems illustrated in this study.

\section{Methods}

\subsection{Modelling Networks of Biophysical Neurons}

\label{biophysicalneuronmethods}

In this section we describe biophysical neuron models used throughout this work and introduce notation related to such models. In the next Section \ref{casestudy} we provide more detail on the specific choice of neuron model and numerical methods.

A detail model of the biophysics underlying neuronal dynamics is the Hodgkin--Huxley (HH) neuron \cite{hodgkin1952quantative}. HH describes the activity of a neuron's membrane potential voltage (typically denoted as $V$) given by the interplay between different ionic channels that up- and down-regulate voltage activity. Fundamentally, when the neuron is sufficiently stimulated, the neuron is driven to ``spike,'' during which time the voltage demonstrates a rapid increase then drops (see Figure \ref{fig:morrislecar}). Theses spikes can be a source of problems when simulating these neurons, as, depending on the physiology, they can be extremely fast--almost discontinuous--so the state derivative can grow very large. Many simplifications with varying degrees of biophysical realism exist, as detailed below in Section \ref{casestudy}. 

\subsection{Backpropagation for Systems of ODEs}

\label{backpropmethods}

Backpropagation can implemented using \textit{automatic-differentiation}, leveraging the chain rule: operations computed in a forward pass are sequentially differentiated backwards to compute gradients for supervised learning. BP is efficient since it computes derivatives using a combination of a single forward-pass and a single backward-pass. BP is therefore a method that would be advantageous for supervised training with biophysical networks.

A critical difference between the networks we consider and artificial neural networks is that each individual neuronal unit is described by systems of ODEs, making BP harder to adapt. How BP can be extended to systems of ODEs has been a topic of research. A direct approach is \textit{backpropagation-through-time} (BPTT) which relies on the fact that solving an ODE numerically forwards in time comes down to a sequence of operations, so if these operations are differentiable we can backpropgate through them to compute gradients \cite{rumelhart1986learning}. 

An alternative to BPTT is the \textit{Neural-ODEs} method (NDEs), which more explicitly incorporates the fact that we are solving an ODE \cite{chen2018neural,rumelhart1986learning}. NDEs derives an ODE for a network's \textit{adjoint} which can be used to compute gradients. In principle, NDEs does not require storage of forward-pass evaluations, since it only requires the final state of the system to numerically simulate the adjoint backwards in time \cite{chen2018neural}. However, in practice it is common to store the forward-pass evaluations since this has been shown to provide more stable results \cite{kim2021stiff}. In this work, we will refer to NDEs without storing the forward-pass as \textit{full-NDE} (fNDE) and NDEs with cached forward-pass as \textit{partial-NDE} (pNDE). 

An initial step in our work is to determine some scenarios in which theses BP variant break down. It has already been observed that they can break down in a variety of cases -- e.g., exploding/vanishing gradients with BPTT, numerical instability from the adjoint approach in both NDE approaches, and irreversibility of an ODE in the fNDE case \citep{kim2021stiff}. As we observe below, \textbf{biophysical neural networks can lead exhibit similar problems, making them particularly difficult to train.}

\subsection{Evolutionary Strategies}

\label{derivation}

BP is not the only way to approximate gradients for training neural networks. For example, the work of Salimans et al. \cite{salimans2017evolution} demonstrated that sampling-based \textit{evolutionary strategies} (ES) can be a viable alternative to BP. ES possess some desirable qualities including generalization and stability. These advantages come at a cost, as they are naturally less efficient for large problems, since they require many samples to approximate gradients. However, with parallel computing ES can be faster be effectively scaled to even surpass BP as no backwards pass is required \cite{salimans2017evolution}. 

We introduce the ES methodology; for more details, we refer the reader to \cite{salimans2017evolution,mohamed2020monte}. The general idea of ES is to first randomly perturb network parameters for a problem to generate observed losses, then to use these observed losses to approximate the gradient for descent.


ES treats the network as a black box, $N(\theta; x)$, where $x$ denotes input and $\theta$ denotes the parameters of the network, of some dimension $m$. First, consider the task of minimizing a loss function $L$ on a set $D$ of labeled data pairs $(x,y)$:
\[\underset{\theta \in \R^m}{\text{argmin}} \sum_{(x,y) \in D} L(N(\theta, x), y)\] 
Instead of directly solving this problem, consider the problem of minimizing the "average" loss, averaged over variations in the parameter $\theta$, resulting in a new loss function:
\begin{align}
\label{smooth_loss_def}
L_p(x) &\defeq \mathbb{E}_{v \sim p_\theta}[L(N(v; x),y)] = \int_{v \in \R^m} p_\theta(v) L(N(v; x),y) \text{d}v,
\end{align}
for some choice of distribution $p_\theta$ centered at $\theta$ in the parameter space. We focus on the normal distribution $\mathcal{N}(\theta, \sigma)$, where $\sigma$ is a hyperparameter. Using other distributions, as in variance reduction \citep{mohamed2020monte}, is a possible future direction. Note that if $p_\theta = \delta_\theta$ is a Dirac-delta at $\theta$, then $L_p = L$, so $L_p$ is a generalization of the original loss.  

\EcommentDone{Add a couple of sentences explaining how the optimiz is done ... eg have fixed sigma, and updating the mean.  And state that the reported values in loss curves etc are evaluated exactly at this single mean value (ie are L not $L_p$ ... }

\begin{figure}[t]
    \centering
    \includegraphics[width=0.35\textwidth]{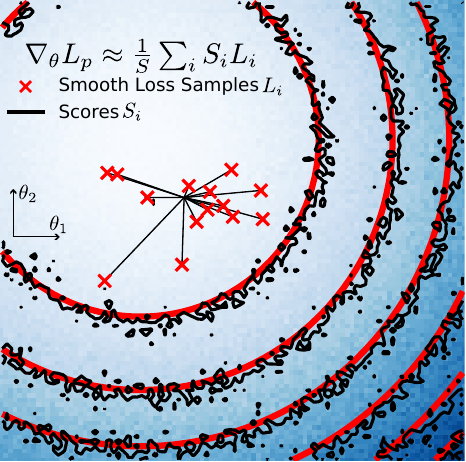}
    \caption{Gradient calculation in two dimensions. The smoothed loss (red contours) is sampled in the parameter space. The gradient is estimated as the mean of the distribution scores, multipled by the sampled losses. In the case where the sample distribution is a normal distribution, scores are simply the vector offsets for each sample.}
    \label{fig:sampling}
\end{figure}


What value does minimizing $L_p$ provide? In many cases, it might be desirable to have a loss that is more robust to parameter variation. However, the primary motivation is that it makes computation of the gradient feasible using Monte-Carlo estimation. In particular, following \cite{mohamed2020monte}:
\begin{align}
\label{derivative_formula}
    \nabla_\theta L_p(x)  & =  \nabla_\theta \int_{v \in \R^m} p_\theta(v) L(N(v; x),y) \text{d}v \\   
     & =   \int_{v \in \R^m} \nabla_\theta p_\theta(v) L(N(v; x),y) \text{d}v \\   
     & =   \int_{v \in \R^m} p_\theta(v) \nabla_\theta \log p_\theta(v) L(N(v; x),y) \text{d}v \\   
	 &= \mathbb{E}(\nabla_\theta \log p_\theta(v)).
\end{align}
Steps 2-3 follow from Leibniz rule, and 3-4 follows from the ``log-trick'' identity:
\begin{align*}
    \nabla p_\theta(v) = p_\theta(v) \frac{\nabla p_\theta(v)}{p_\theta(v)} = p_\theta(v) \nabla_\theta \log p_\theta(v).
\end{align*}
This expected value can be approximated through Monte-Carlo estimation, resulting in:
\begin{align}
    \nabla_\theta L_p(x) & \approx  \frac{1}{S} \sum_{s=0}^S L(N(v^{(s)}; x), y) \nabla_\theta \log(p_\theta(v^{(s)})),
\end{align} 
Thus, we arrive at an estimate based $S$ parameter samples $v^{(s)}$, to be drawn from the distribution $p_\theta(v^{(s)})$ via Monte-Carlo sampling. This expression states that the gradient can be predicted using a weighted sum of evaluations of the network using ``noisy" parameters (see Figure~\ref{fig:sampling} for an illustration). The weights are given by the log derivative, which can be explicitly evaluated in most cases. 

The value of the above estimate (Equation \ref{derivative_formula}) is that Monte-Carlo estimation scales much better with dimension than direct numerical methods for gradient estimates.\EcommentDone{clarify what these direct methods are that scale poorly -- presumably eg finite diff?}
For example, using finite difference to compute gradients scales linearly in the number of parameters. On the other hand, Monte-Carlo estimation scales with the ``effective dimensionality'' of the loss instead of the physical number of parameters. This is particularly advantageous for neural networks since the number of parameters is typically very large and curbing scaling is a central difficulty in estimating gradients numerically.

One advantage of the evolutionary algorithm is the \textbf{filtering of observation noise} when using stochastic network components, making it better suited for physical problems. In particular, suppose
\begin{align} \label{eqn:lossvariability}
    \text{loss}(X) = L(X) + \varepsilon(X),
\end{align}
where $L$ is a well-defined function and $\varepsilon$ models noise or granular local variation, potentially depending on $X$, with mean $0$. Backpropagation, which only considers a single evaluation, may be skewed by the term $\varepsilon(X)$. However, ES relies on expected value, filtering out $\varepsilon(X)$ since it has mean $0$.

Pseudo-code for ES is provided below. The noisify() operation creates $S$ copies of the parameters $\theta$ drawn from $p_\theta$, centered at $\theta$.

\begin{algorithm}[H]
\caption{Evolutionary Strategy}\label{alg:cap}
\textbf{Input:} dynamic parameters $\theta$, input $x$, desired output $y$ \\
\textbf{Output:} approximate gradient, $\text{output} \approx \nabla_\theta L_p(x)$
\begin{algorithmic}
\State $\theta_{\text{noisy}} = \text{noisify}(\theta)$
\State $\text{output} = 0$
\For{$v \in \theta_{\text{noisy}}$}
\State $\text{loss} = L(N(v; x), y)$ 
\State $\text{output} = \text{output} + \text{loss} \cdot \nabla_\theta \log(p_\theta(v)) / S$
\EndFor{} \\
\Return $\text{output}$
\State \textbf{finish}
\end{algorithmic}
\end{algorithm}

In summary, a fixed $\sigma$ is chosen for the normal distribution by hyper-parameter sweeping (we use $\sigma = 0.1$). Next, $S$ random copies of the weights are chosen from the distribution $p_\theta$ and the loss resulting from a forward pass applied to the input batch is recorded for each sampled weight trial (this can be effectively performed in parallel in one large batch; we choose $S = 100$ in the biophysical neural network case). Finally, the sampled losses and noisy parameters are used in equation \ref{smooth_loss_def} to compute approximate gradients. These gradient estimates are then used with an optimizer (we used ADAM) to perform stochastic gradient descent \cite{kingma2017adam}. In practice, ``mirrored sampling'' as in \citep{salimans2017evolution} is used, where half of the weights are random and the other half are set by negating these samples, leading to a substantial decrease in variance. 
\begin{table}[t]
    \begin{tabular}{c|p{6cm}|p{6cm}}
        \hline 
        \textbf{Approach} & \textbf{Pros} & \textbf{Cons} \\
        \hline
        BPTT & More stable since no backward ODE solving. & Memory-intensive.  \\
        pNDE &  Adaptive step size BP. & Memory-intensive.  Adjoint dynamics can cause instability \cite{kim2021stiff}. \\
        fNDE & Constant memory over time. & Can be unstable due to adjoint solve \textit{or} reversing the ODE \cite{yingbo2021comparison}. \\
        ES & No BP, hence high stability.  Constant memory over time. & May need many samples.  \\
        \hline
    \end{tabular}
    \label{summarytable}
    \vspace{10px}
    \caption{Summary of approaches for gradient computation of neural ODEs considered and their potential up- and down-sides.} 
\end{table}




\section{Results}

\subsection{Case Study 1: Learning to set firing rates in a single biophysical neuron}

\label{casestudy}

To motivate our exploration of training networks of biophysical neurons (BNNs), we begin with the simple case of a \textit{single neuron}. We find that even this simple case result in irregularities causing backpropagation (BP) to diverge, in the sense that gradients blow up or accumulate errors of high magnitude, rendering them unusable for descent. This is seen in all cases of BP mentioned in the prior section: direct BPTT, fNDE and pNDE. We then apply the evolutionary strategy (ES) to demonstrate the advantages it can provide. In summary, we observe that even a single biophysical neuron can be used as a testbed for evaluating the efficacy of different direct supervised learning methods for biophysical problems. 

\subsubsection{Choice of Neuron Model}


In order to ``train'' a single neuron in a supervised learning setup, the choice of (i) neuron model and (ii) loss function need to be specified.  We first motivate the former.

A variety of neuron models exist, differing in multiple ways. These models can be arranged based on increasing biophysical complexity.  Naturally, increases in computational complexity tend to accompany increases in biophysical complexity \cite{izhikevich2004which}.  Leaky Integrate-and-Fire (LIF) cells lie on the simplest end of the range:  they have few parameters and can be efficiently simulated, but lack biophysical realism in their membrane dynamics and spike generating mechanisms.  Hodgkin--Huxley (HH) type models lie at the other:  they explicitly model the dynamics of multiple gating variables, representing ion channel kinetics, in addition to membrane voltage.  A middle ground is occupied by models that use timescale separation or other methods to reduce HH type models to fewer equations. 


The majority of current work training spiking neural networks has understandably focused on the LIF model due to its simplicity.  Here, we take a step from here toward more biological realism by studying one of the popular ``middle ground'' models, the Morris--Lecar (ML) neuron \EcommentDone{add original-ish cite for ML model} \cite{morris1981voltage}.  This model retains, if minimally, the signature of HH type models, in that it explicitly models the dynamics of a (single) gating variable in addition to the membrane.

\begin{figure}[t]
\centering
\includegraphics[width=0.8\textwidth]{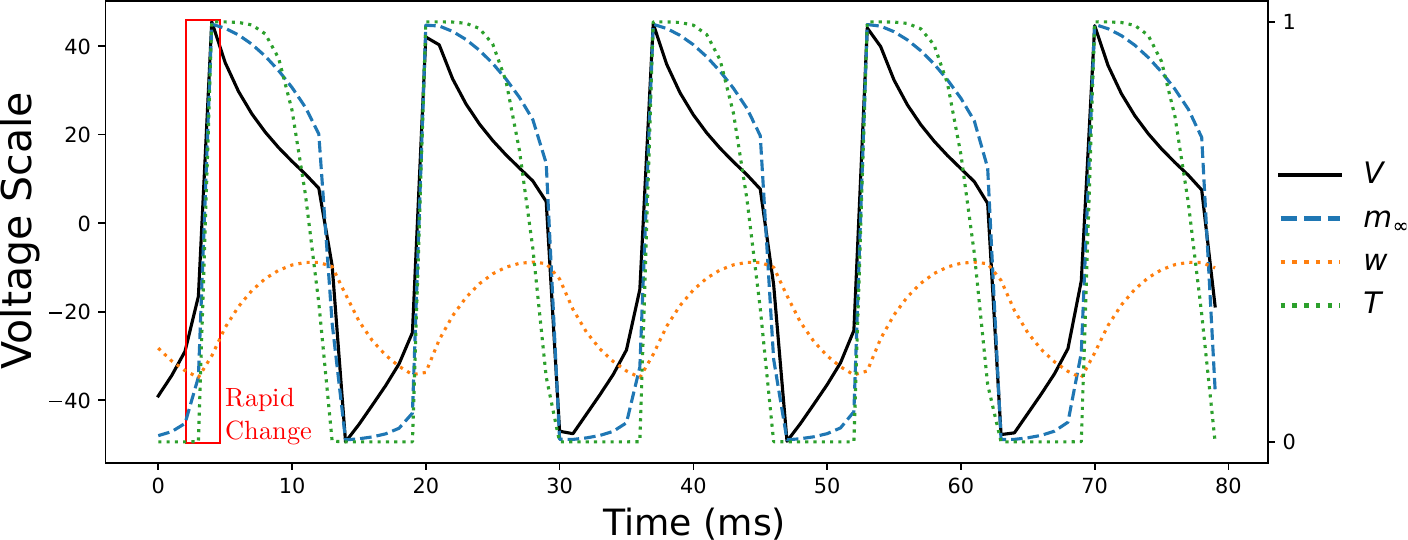}
    \caption{\textbf{The Morris--Lecar Neuron Model.} A snapshot of typical ML tonic firing behavior is shown, demonstrating the interplay between the potassium and calcium gating variables ($m_\infty$ and $w$, respectively), and how they drive the voltage to spike and decay. Also shown is the output function, $T(V)$. Voltage varies according to the left scale and other variables are on the scale (0-1) on the right. The red annotation box notes a ``spike initiation'' region, where the neuron states can change extremely fast depending on model parameters.}
    \label{fig:morrislecar}
\end{figure}




In more detail, the ML model describes a neuron with two dynamical states: a membrane voltage potential, $V(t)$ and a potassium gating variable, $w(t)$ regulating the opening and closing of a potassium ionic channel \cite{morris1981voltage}. Input currents come in the form of a calcium, potassium and leak channel, as well as external applied current:
\begin{align} \label{eqn:morrislecar}
\begin{split}
C \frac{{dV}}{{dt}} &= g_L(V_L - V) + g_{Ca}m_{\infty}(V)(V_{Ca} - V) + g_Kw(V)(V_K - V) + I_{app} \\
\frac{{dw}}{{dt}} &= \phi \cdot \frac{{w_{\infty}(V) - w}}{{\tau_w(V)}}
\end{split}
\end{align}
\begin{align*}
m_{\infty}(V) &= \frac{1}{2} \left(1 + \tanh\left(\frac{V - V_1}{V_2}\right)\right); \, w_{\infty}(V) = \frac{1}{2} \left(1 + \tanh\left(\frac{V - V_3}{V_4}\right)\right) \\
\tau_w(V) &= \frac{1}{\cosh\left(\frac{V - V_3}{2V_4}\right)}
\end{align*}
The choice of constants is given in Section \ref{sec:methodsbnns}. Spiking occurs in this model through the interplay between the three input channels. In the presence of sufficiently high applied current, $I_{app}$, the neuron will exhibit periodic ''tonic'' spiking.  In brief, spikes are driven by rapidly opening $Ca$ channels ($m_\infty \rightarrow 1$), causing the voltage to increase toward the positive value $V_{Ca}$  (see Figure \ref{fig:morrislecar}; $V_{Ca}=120$ mV here) \EcommentDone{put in numerical value, also check in appendix that same notation used with $V_{Ca}$ instead of $V_1$, etc}). Following this, a potassium channel gated by $w(V)$ more slowly opens, driving the voltage activity back toward rest ($V_L = -60$ mV). Note that increasing $\phi$ will speed up transition rates of $w$, making spikes shorter, and shrinking $V_2$ or $V_4$ will make $W_\infty$ or $\tau_w$ change value more quickly, respectively.

\subsubsection{Choice of Task:  Output and Loss Functions}

With the neuron model defined, we turn to the task on which we will train our single-neuron via different learning approaches. Our choice is a simple one, in which we seek to learn a constant input current $I_{app}$ that will drive spiking of the single neuron at a desired rate.  Here, the value of $I_{app}$ may be thought of as either the bias for a one-neuron network, or the weight connecting that neuron to a constant source of input. We choose in particular to minimize firing rates, via a ``\textit{turn off} task,'' in which the loss measures the mean output of the neuron. If $T(V(t))$ is the spiking output of the neuron at time $t$, the loss function is defined as:
\begin{align}
    L := \int_0^\tau T(V(t)) \text{d} t,
\end{align}
for some timeframe $\tau$. 

We turn next define the spike output of the neuron.   Quantifying spikes in terms of voltage can be done in multiple ways \cite{gerstner2014neuronal}; the most common and simplest approach is to quantify spikes based on a threshold, $V_T$. The discrete approach is as follows:
\begin{align}
    T(V(t)) &= \begin{cases} 1 & V(t) \geq V_T \\ 0 & V(t) < V_T \end{cases}.
    \label{eqn:outputdefstiff}
\end{align}
However, this approach is not compatible directly with backpropagation (BP) since it is non-differetiable. This can be easily alleviated by using a soft threshold, i.e. a sigmoid function:
\begin{align}
    T(V) := \frac{1}{1 + \exp(-\frac{V - V_T}{K_p})}.
    \label{eqn:outputdef}
\end{align}
Note such an approach of smoothing the spiking activity is the inspiration for ``surrogate gradients,'' which have been effective for training LIF neural networks \cite{zenke2018surrogate}, and is also used more broadly in computational neuroscience \citep{gerstner2014neuronal}.  In the limit $K_p \rightarrow 0$, the derivative $T'(V)$ approaches a Dirac delta centered at $V_T$. This will naturally cause BP gradients at spike times to be extremely large and unstable. Hence, choosing a reasonably smooth $T(V)$ is important when BP is to be used or compared (here we use $V_T = 10, K_p = 3$). 


We briefly note that smoothness of $T(V)$ is not a concern with ES. Unlike BP, which relies on differentiability of each operation to backpropagate gradients, ES does not propagate gradients and instead samples in the parameter space to numerically approximate the gradient. This enables us to train networks using ES even for output functions  $T(V)$ that would preclude use of BP, such as Equation \ref{eqn:outputdefstiff}.

\begin{figure}[t]
\includegraphics[width=1.0\textwidth]{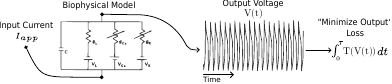}
    \caption{
    \textbf{Case study:  controlling the spike rate of a single Morris--Lecar neuron model.} 
    We simulate a single Morris--Lecar neuron receiving a fixed input current and producing an output voltage over time. The goal is to minimize the total output voltage, and this is done by adjusting the parameter of the input current $I_{app}$. The circuit shown models a single Morris--Lecar neuron and is adapted from \cite{morris1981voltage}.}
    
    \label{fig:motivate}
\end{figure}

\begin{figure}[t]
\centering
\includegraphics[width=0.7\textwidth]{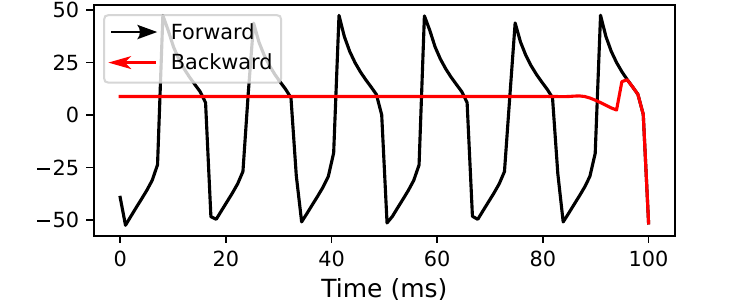}
    \caption{\textbf{Irreversibility of Morris--Lecar, making the ``full'' neural ODEs approach not viable.} Simulation is performed forwards and backwards in time using the leapfrogging scheme mentioned in the text with $\delta t = 0.001$ ms. However, the backwards evaluation rapidly diverges almost immediately:  the red backward curves tracks the forward solution only from time 100ms to roughly 95ms. The rapid changes of voltage at spikes likely contributes to this small window of reversibility (see Proposition \ref{prop1} and \ref{prop2}) \EcommentDone{state somewhere that the smallness of this window persists under other numerical schemes you've tried?}.}
    \label{fig:nonreversible}
\end{figure}

\subsubsection{Identifying Pitfalls BP Encounters with Biophysical Models}

\label{sec:pitfalls}

With the model and loss defined, we turn to the problem of learning the parameters that solve the task at hand (i.e., minimize the loss). In this case, there is a single learnable parameter $I_{app}$ (see Equation \ref{eqn:morrislecar}).  We initialize this with $I_{app} = 100$.  For the simple \textit{turn off} problem, there is a clear solution, which is to decrease $I_{app}$ so that the neurons stop persistently spiking.  Nevertheless, with certain parameters or a sufficiently long timeframe, BP is often unable to find it, as it suffers from multiple problems in computing a correct gradient direction for descent: 

    
\textbf{(I) Noisy loss curve (poorly defined gradients):} As in Figure~\ref{fig:landscape-train}.A, the ``loss landscape'' for this problem has a clear (upwards) trend, but is noisy and involves large jumps. This happens because the loss function, as simple as it is, quantifies spiking activity -- which is, even after averaging over a timeframe \EcommentDone{are we smoothing over time?  if we're referring to T(V), say smoothing over voltage?}, an essentially discrete phenomenon (we also see large jumps due to bifurcation properties of the single neuron; e.g., Figure \ref{fig:suppzoom}). \EcommentDone{A good place to discuss those ``inset'' traces from peak and valley of loss} In order to better capture the structure of the loss, we need to ``spatially smooth,'' i.e. filter out variability in the parameter space  (see discussion and Equation \ref{eqn:lossvariability}) \EcommentDone{rephrase ... smooth over parameters} .
This is the principle behind the evolutionary strategy, dealing with a loss that is averaged locally in the parameter space, as over the distribution of parameters shown in Figure \ref{fig:landscape-train}.A and B (see Section \ref{derivation}). 
    
\textbf{(II) Irreversibility of the ODE:} As in Section \ref{backpropmethods}, one approach to backpropagate gradients with ODEs is ``full neural ODEs'' (fNDEs). As described, this method uses constant memory over time as it is based on the final ODE state after the forward pass, and simulates this backwards through time to compute gradients with the NDEs approach \cite{chen2018neural}.  Full neural ODEs are an important alternative to BPTT and ``partial NDEs'' (pNDEs, storing intermediate \EcommentDone{rephrase along lines of ... storing value of neural states at each timestep} neural states at all forward passes) for biophysical neural networks.  This is because BPTT and pNDE methods, while avoiding the irreversibility issues described next, can quickly become infeasible due to memory issues:  consider that to simulate seconds of the neural activity, we may need thousands to hundreds of thousands of forward ODE steps  \cite{kim2021stiff}. As tasks solved by biological organisms involve many neurons for many seconds or even minutes or more \cite{pei2022neural}, this is a serious problem; the fact that the biophysical models we consider require small timesteps and involve multiple intermediate variables for each timestep evaluation further compounds it. Real-time recurrent learning (RTRL), which utilizes an alternative gradient factorization to BPTT, would also be problematic due to its poor scalability with respect to the network size~\cite{marschall2020unified}.  \EcommentDone{explain a bit more -- add a phrase or short sentence explaining the scalability issue, and clarifying (if I have this right of course) that it's different matter than storing all timesteps ... ?  }

This said, the fNDEs approach delivers its own challenges for models, such as spiking neuron systems, with rapidly changing dynamics. Specifically, multiple works have shown that ``reversing the ODE'' to flow backward in time, as for the fNDE approach, can cause gradients to drift and be unstable \cite{kim2021stiff, gholami2019anode} \EcommentDone{please check to make sure my rephrasing didn't mess this up}; many ODE models are simply irreversible for long enough time windows. We show in Figure \ref{fig:nonreversible} how the Morris--Lecar model suffers from this irreversibility.


To understand how irreversibility might arise, let's consider a simple one dimensional ODE $\frac{\diff x}{\diff t} = f(x(t))$. The reversed ODE is $\frac{\diff x}{\diff s} = -f(x(s))$. While $\frac{\diff x}{\diff t} = f(x(t))$ with locally Lipschitz continuous $f(x)$ is reversible in theory, in practice, there are several complications due to instabilities and numerical errors~\cite{gholami2019anode}. As explained in~\cite{gholami2019anode}, consider a simple linear ODE $\frac{\diff x}{\diff t} = \lambda x$. If $\lambda < 0$, reversing the ODE flips the sign of the derivative of $f$, thereby amplifying the errors exponentially fast while solving the reverse ODE. \textbf{Exponential amplification of numerical errors when solving the reverse ODE can lead to the accumulation of numerical drift, especially during long-duration simulations, rendering it impractical to solve the reversed ODE.} The following propositions indicate how this issue can apply to biophysical neuron models, starting from LIF (Proposition 1) and extending to BNNs (Proposition 2). 

\begin{prop}
    Considering a leaky-integrate-and-fire (LIF) model, $\tau \frac{\diff V}{\diff t} = E_L - V$, with voltage $V$, time constant $\tau > 0$, and reversal potential $E_L$. The derivative of the reverse ODE function is strictly positive.
    \label{prop1}
\end{prop}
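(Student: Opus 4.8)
The plan is to write down the reverse ODE explicitly and then differentiate its right-hand side; the positivity will fall out immediately from the sign conventions. First I would put the LIF dynamics in the standard autonomous form $\frac{\diff V}{\diff t} = f(V)$, reading off the forward vector field
\begin{align*}
    f(V) = \frac{E_L - V}{\tau}.
\end{align*}
Following the one-dimensional construction given just above the statement (where $\frac{\diff x}{\diff t} = f(x)$ reverses to $\frac{\diff x}{\diff s} = -f(x)$), the reverse ODE function is $g(V) \defeq -f(V) = \frac{V - E_L}{\tau}$.

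Next I would simply differentiate $g$ with respect to the state variable $V$. Since $g$ is affine in $V$ with slope $1/\tau$, its derivative is the constant
\begin{align*}
    g'(V) = \frac{\diff}{\diff V}\left[\frac{V - E_L}{\tau}\right] = \frac{1}{\tau}.
\end{align*}
Because the time constant satisfies $\tau > 0$ by hypothesis, this gives $g'(V) = 1/\tau > 0$ for every $V$, which is precisely the asserted strict positivity.

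There is essentially no technical obstacle here: the argument reduces to a one-line derivative of an affine function. The only point I would state with care is the bookkeeping of the sign flip in passing to the reverse ODE, since this is exactly what turns the stable, contracting forward dynamics (where $f'(V) = -1/\tau < 0$) into unstable, expanding reverse dynamics. To connect the result to the preceding exponential-amplification discussion, I would close by observing that $g'(V) = 1/\tau$ plays the role of the positive growth rate $\lambda$ in the linear model $\frac{\diff x}{\diff s} = \lambda x$, so perturbations incurred while integrating the reversed LIF equation grow like $e^{s/\tau}$; this confirms that the reversed flow amplifies numerical errors in the manner anticipated, motivating the difficulty of the fNDE approach even in this simplest neuron model.
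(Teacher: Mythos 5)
Your proof is correct and follows essentially the same route as the paper: form the reverse ODE $\frac{\diff V}{\diff s} = -\frac{E_L - V}{\tau}$ and differentiate its right-hand side with respect to $V$ to obtain the constant $1/\tau > 0$. The closing remark linking $1/\tau$ to the growth rate $\lambda$ in the linear model is a nice addition but not part of the paper's own (one-line) proof.
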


\begin{proof}
    The reverse ODE is simply
    \[
    \frac{\diff V}{\diff s} = -\frac{E_L - V}{\tau},
    \]
    with the derivative of the reverse ODE function as 
    \[
    \frac{\diff (\frac{\diff V}{\diff s})}{\diff V } = \frac{1}{\tau} > 0
    \]
\end{proof}

\begin{figure}[t]
\includegraphics[width=1.0\textwidth]{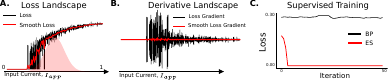}
    \caption{
    \textbf{Loss landscape and learning on this landscape for the single neuron ``turn-off'' task.} 
    \textbf{A} shows the resulting loss (in black) when different values of input current are used. Large jumps in loss are due to bifurcation behavior where neuron transitions from no spiking to tonic spiking (see Figure \ref{fig:suppzoom}). \EcommentDone{Would be great to add an inset or appendix fig with traces at a ``peak" and ``valley" value of $I_{app}$, so readers can understand what is driving the big jumps, and to discuss that in a sentence or two in text ...} The graph is noisy but shows a general upward trend, indicating that a higher input current increases the neuron's spiking rate. The red line shows the smoother loss after Gaussian smoothing. Note that all trials used the same initial condition and 10 seconds of simulation with high firing, hence the high variability in loss. \textbf{B} presents the derivatives of the loss curves from B, highlighting that the original rough loss leads to an unstable and irregular gradient. Lastly, \textbf{C} demonstrates that directly training this model does not converge due to the gradient's instability. However, it does converge when training uses the smoother loss gradient using the evolutionary strategy (ES).  \EcommentDone{Wonder if we should split up this Fig.  Thinking that early panels are (or could be) referenced from text pretty early -- well before current Fig 5, and panel D then later.  So could just make panel D a one-panel fig that comes later ...} }
    
    \label{fig:landscape-train}
\end{figure}

\begin{figure}[t]
\includegraphics[width=1.0\textwidth]{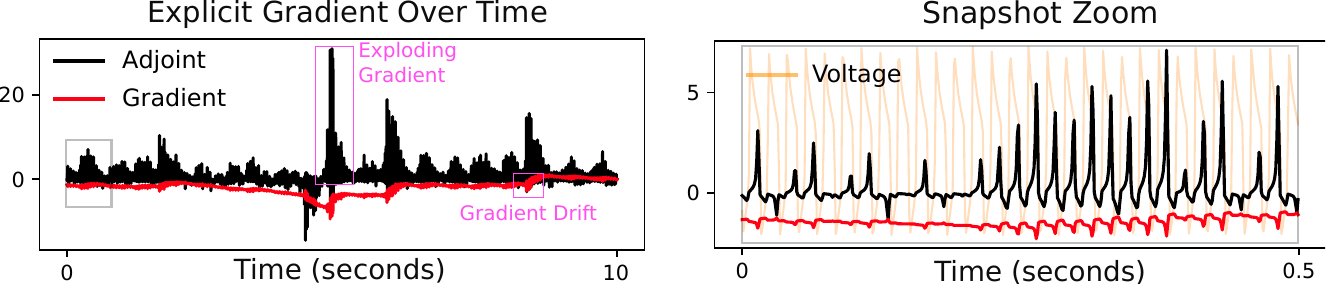}
    \caption{\textbf{ODEs Gradient Calculation Demonstrates Drift and Exploding.} Gradients are calculated backwards through time via the adjoint, using the stored forward states approach (backwards evaluation does not work, as in Figure \ref{fig:nonreversible}).  Results are shown for the single neuron \textit{turn off} benchmark described in the text with a 10 second timeframe. Right panel shows a zoom in of the first 0.5 seconds with voltage activity in orange. The running gradient (from right to left) is shown, as well as the ``adjoint,'' which measures the derivative $\frac{\partial L}{\partial V(t)}$ at each timestep $t$. Importantly, note that both are highly correlated with the beginning and end of voltage spikes: the adjoint itself appears to ``spike'' very quickly at these times. We can see that the gradient calculation drifts as it is computed backwards and suffers from ``exploding'' at multiple places. Note, as in the text and Figure \ref{fig:landscape-train}.A, the trend in loss as the applied current increases is positive, but the gradient drifts to negative values (as shown) over the time window.}
    \label{fig:adjointdrift}
\end{figure}

Therefore, even in the simple LIF case, reversing the ODE can potentially lead to the amplification of numerical errors, with the potential amplification inversely proportional to the time constant $\tau$. This problem should extend to other conductance-based models, which also include similar terms with channel voltages being pulled to their reversal potentials by their respective "time constant". We also demonstrate that this issue applies to the gating variables in these models (e.g., Equation \ref{eqn:morrislecar}).

\begin{prop}
    Consider a model with voltage variable $V$ and gating variables $m_i$ ($i=1, 2, ...$), with the dynamics of each gating variable provided by $\frac{\diff m_i}{\diff t} = \frac{m_{i,\infty}(V) - m_i}{\tau_{m_i}(V)}$ for some $V$, where time constant $\tau_{m_i}(V) > 0$ and steady state gating variable value $m_{i,\infty}(V)$. The partial derivative of the reverse ODE function for the gating dynamics with respect to each gating variable is positive.
    \label{prop2}
\end{prop}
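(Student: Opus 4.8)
The plan is to mirror the structure of the proof of Proposition \ref{prop1} exactly, since the gating dynamics have precisely the same affine-in-the-state form as the LIF voltage equation. The key structural observation is that both $m_{i,\infty}(V)$ and $\tau_{m_i}(V)$ are functions of the voltage $V$ alone and carry no dependence on the gating variable $m_i$ itself. Hence, when we treat $V$ as a fixed parameter (the ``for some $V$'' in the statement) and view the gating equation as a scalar ODE in $m_i$, the right-hand side is affine in $m_i$ with the only $m_i$-dependence sitting in the $-m_i$ term of the numerator.

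First I would write down the reverse ODE by negating the forward right-hand side, exactly as in the LIF case:
\[
\frac{\diff m_i}{\diff s} = -\frac{m_{i,\infty}(V) - m_i}{\tau_{m_i}(V)}.
\]
Call this reverse vector field $g_i(m_i) \defeq -\tfrac{1}{\tau_{m_i}(V)}\bigl(m_{i,\infty}(V) - m_i\bigr)$. Next I would compute the partial derivative with respect to $m_i$, holding $V$ (and therefore $m_{i,\infty}(V)$ and $\tau_{m_i}(V)$) fixed. Since those two quantities are constant with respect to $m_i$, the differentiation acts only on the $-m_i$ contribution, giving
\[
\frac{\partial g_i}{\partial m_i} = -\frac{1}{\tau_{m_i}(V)}\cdot\frac{\partial}{\partial m_i}\bigl(m_{i,\infty}(V) - m_i\bigr) = \frac{1}{\tau_{m_i}(V)}.
\]
The final step is to invoke the hypothesis $\tau_{m_i}(V) > 0$ to conclude $\partial g_i/\partial m_i = 1/\tau_{m_i}(V) > 0$, which is the claimed positivity. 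The same argument applies index-by-index for every gating variable $m_i$, so no additional work is needed to cover the general case.

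Honestly, there is no real obstacle here: the computation is a one-line derivative and the result is structurally identical to Proposition \ref{prop1}, with the time constant $\tau$ replaced by the voltage-dependent $\tau_{m_i}(V)$ and the reversal potential $E_L$ replaced by the steady-state value $m_{i,\infty}(V)$. The only point warranting a sentence of care is the interpretation of ``partial derivative'' together with ``for some $V$'': the intended reading is that $V$ is frozen and the gating equation is analyzed as a one-dimensional ODE in $m_i$, so that $m_{i,\infty}(V)$ and $\tau_{m_i}(V)$ genuinely behave as constants. I would emphasize that this captures only the diagonal entry of the Jacobian corresponding to $m_i$; in the fully coupled system $V$ itself evolves, but the positivity of this diagonal term is exactly what is needed to connect to the error-amplification mechanism described before Proposition \ref{prop1}, extending the irreversibility concern from the membrane voltage to the gating variables of biophysical neuron models such as Morris--Lecar (Equation \ref{eqn:morrislecar}).
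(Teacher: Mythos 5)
Your proposal is correct and follows essentially the same route as the paper's own proof: negate the forward right-hand side to obtain the reverse ODE, then differentiate with respect to $m_i$ while holding $V$ (and hence $m_{i,\infty}(V)$ and $\tau_{m_i}(V)$) fixed, yielding $1/\tau_{m_i}(V) > 0$. Your added remarks about the affine-in-$m_i$ structure and the diagonal-Jacobian interpretation are sound clarifications but do not change the argument.
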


\begin{proof}
    The reverse ODE of $\frac{\diff m_i}{\diff t} = \frac{m_{i,\infty}(V) - m_i}{\tau_{m_i}(V)}$ is 
    \[
    \frac{\diff m_i}{\diff s} = -\frac{m_{i,\infty}(V) - m_i}{\tau_{m_i}(V)},
    \]
    with the partial derivative of the reverse ODE function with respect to $m_i$ as
    \[
    \frac{\partial (\frac{\diff m_i}{\diff s})}{\partial m_i } = \frac{1}{\tau_{m_i}(V)} > 0
    \]
\end{proof}

Due to positive derivatives of the reverse ODE function, numerical errors may be amplified when solving the reverse ODE. This amplification is exacerbated especially if the derivative is high. This amplification can limit the accuracy of backpropagation when using the adjoint-based method, potentially causing an accumulation of numerical drift, as observed in our reverse ODE solution (see Figure \ref{fig:adjointdrift}). To enhance interpretability, we have chosen to focus our discussion on one-dimensional ODEs rather than a coupled system of ODEs, as would be the case in Equation \ref{eqn:morrislecar}. We remark that the irreversibility issue could become more pronounced when considering networks of neurons where each neuron's precise spike timing can influence the entire system.

\textbf{(III) Exploding gradients at spikes:} We note that irreversibility \EcommentDone{got a bit confused about use of phrase gradient drift ... do we mean the irreversibility studied in previous point?  if so can we just call it irreversibility when we refer to the inability to do a fNDE approach?  sorry if i'm confusing things!} is not the only problem caused by the high and positive derivative of the reverse ODE function. In particular, the voltage and activation gating variables can rapidly change at spike initiation, causing behavior that is ``almost'' discontinuous. In the proposition above, the time constant plays a crucial role in determining the magnitude of the voltage rate of change. Rapid changes indicate high voltage time derivatives, which in turn can lead to gradient explosions. Note that this occurs irrespective of whether we use the neural ODEs or direct BPTT: both require gradient computation at spike times. As noted below, the problem can be compounded when considering a network of neurons, causing gradients to blow up. Figure \ref{fig:adjointdrift} illustrates the exploding of gradients in certain cases focused around spikes. Exploding gradients can cause the gradient to behave unstably and errors can gradually accumulate over time, causing the gradient to drift (for example the sign can be completely wrong, as in Figure \ref{fig:adjointdrift}).   
\EcommentDone{can we be more explicit in relating the quick bursts of high gradients at spikes, and the overall   phenomenon of a slow drift / mismatch in gradient?  Related, can we insert specific phrase along the lines that what we mean by grad drift is that the gradient gradually becomes negative where it should be positive, if that is right (I'm not sure)?  Also, in fig caption / text as is, seemed needed more detail on how we are to interpret the adjoint curves?  }


\EcommentDone{Just noting that at least Fig 6 still has SONG in caption -- revise?}

\subsubsection{Applying Evolutionary Strategies}

In comparison to BP, evolutionary approaches perform in more general contexts on this simple benchmark (see Figure \ref{fig:landscape-train}.C). As in the Methods above, ES works by locally sampling in the parameter space (which is one dimensional in this case) and fitting a gradient using Monte Carlo estimation. In essence, it is a gradient estimate for the ``smoothed loss,'' and hence is more robust to noise (see Figure \ref{fig:landscape-train}) \cite{mohamed2020monte}. When we use a long timeframe (10 seconds), direct BPTT gives incorrect gradients with an incorrect sign or large magnitude, while ES is able to capture the general gradient trend and tune the $I_{app}$ parameter effectively (Figure \ref{fig:landscape-train}.C). \textbf{(I)} is alleviated since ES ``filters out'' noise in the loss landscape by sampling multiple times. \textbf{(II)} and \textbf{(III)} also do not pose a problem since evolutionary algorithms are forward pass only, circumventing the problems related to irreversibility when backpropagating backwards. A clear downside of ES is the need for potentially many samples. In the one-dimensional case here, we find that ES has a high variance (i.e. high variability from a good gradient estimate) and thus requires many samples (around 100). \EcommentDone{add quick of phrase on what high var means} \EcommentDone{state here how many samples that was, or refer to where that is given?}Other evolutionary approaches (e.g., random search/weight perturbation \cite{jabri1992weight}) may perform better in this specific example. However, as observed elsewhere \cite{salimans2017evolution}, ES often scales better with dimension since it relies on Monte-Carlo estimation and can leverage efficient gradient descent approaches such as ADAM and momentum-based gradient descent \cite{kingma2017adam}.

\subsection{Biophysical Neural Network Solving Discrete Evidence Integration}

In this section, we evaluate the viability of BP and ES for a task that involves training a recurrent network of biophysical neuron models to solve discrete evidence integration. Evidence integration is one of the most common tasks in computational and experimental neuroscience \citep{wang2002, shadlen2001theory, carli1989distinct,aitken2023,bogacz2006physics} \EcommentDone{Add cite to Bogacz et al, Psych. Rev., 2006}. The crux of the problem is decision making over time: a network is given a stream inputs, each giving weak evidence for or against a decision alternative, the network accumulates (i.e., integrates) these inputs to make an accurate overall decision. There are multiple variants, including discrete, contextual or continuous integration. In this work, we focus on the discrete integration task; 
a schematic overview is given in Figure~\ref{fig:integrator_task}.A.

\subsubsection{Problem Setup and Network Definition}

\label{integrator_task}

\begin{figure}[t]
    \includegraphics[width=1.0\textwidth]{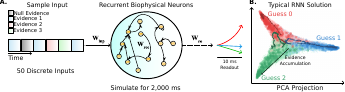}
    \caption{\textbf{Evidence Integration Task Setup and Typical Solution.} \textbf{A} Illustration of task and setup. Inputs are sequences of "evidences" encoded as binary vector inputs. These are fed into a recurrent neural network of biophysically based Morris--Lecar neurons. The neurons' outputs are fed into a linear layer with three task outputs. The mean output over a final short time window (10 ms) is computed and the prediction for which evidence occurs most often corresponds to the output with highest mean readout (1, 2 or 3). \textbf{B} A typical ``integrator'' solution for this problem \cite{aitken2023} showing the PCA projection of the hidden neuron activities over time for an RNN solution, colored by the ground-truth label. Note that the solution is expected to form attractor structures: the hidden state over time is attracted to one of three states quantifying the output decision. }
    \label{fig:integrator_task}
\end{figure}

\begin{figure}[t]
    \centering
    \includegraphics[width=0.9\textwidth]{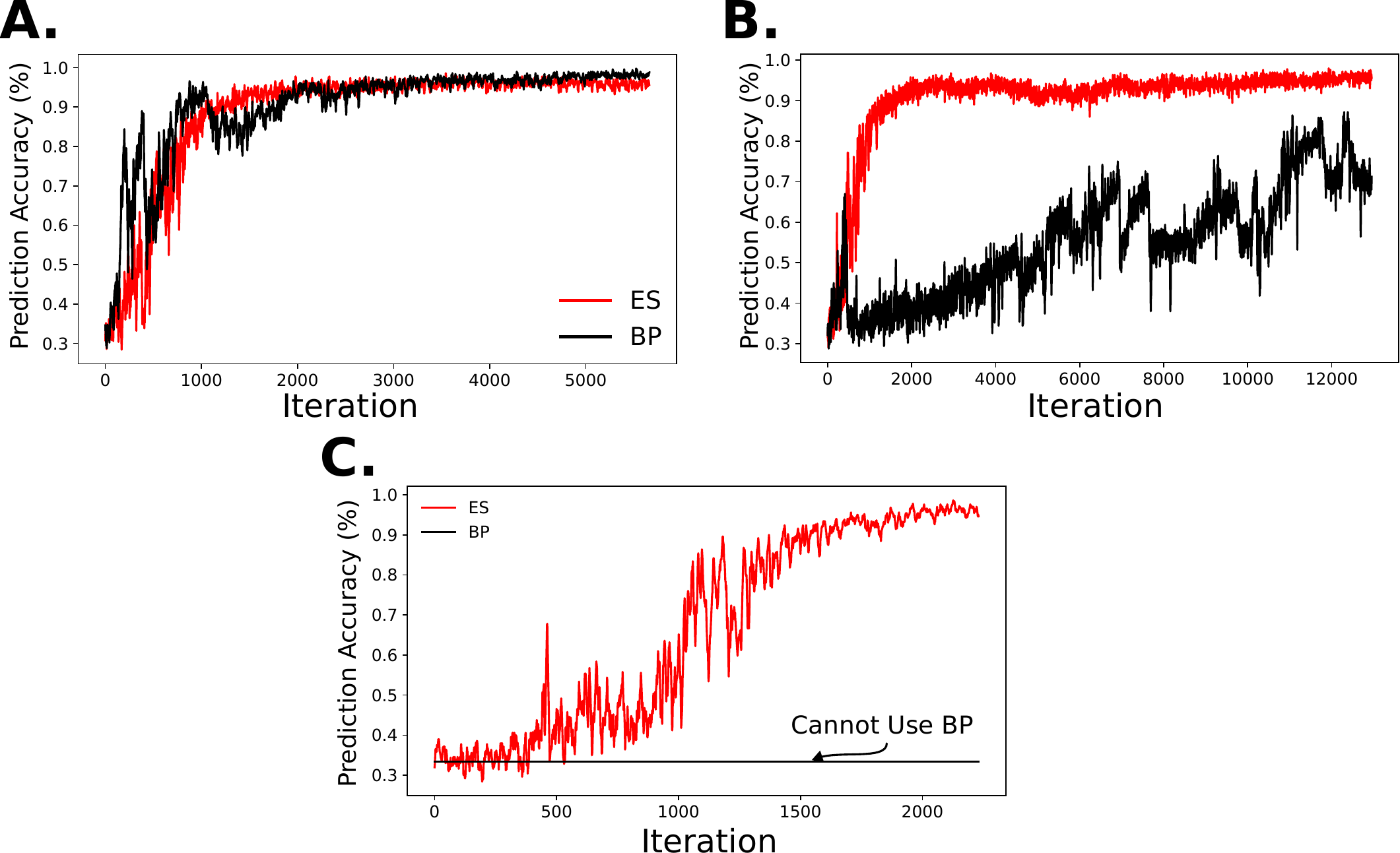}
    \caption{\textbf{Learning the evidence integration task in a network of Morris--Lecar neurons.} \textbf{A-C} Accuracy curves for ES and BP averaged over 5 training runs with randomized initialization for this task. Each iteration uses a batch size of 64 samples 100 samples are drawn from the parameter space for ES. \textbf{A} corresponds to ``non-explosive'' parameters, as in text. \textbf{B} corresponds to the explosive parameter case. In this case, large gradients cause instability training directly with BP, leading to slower convergence than ES. Finally, in case \textbf{C} the memory consumption with BP is too large for training to complete (a 1 second timeframe is used).  \EcommentDone{numbers on axis are a bit hard to read, let's make 'em a bit bigger ...} \EcommentDone{Question here -- could we explain in a sentence how iterations are compared?  ie how many ``runs'' go into each, say?  I know this is not the only important thing since BP requires expensive backward passes and memory, and ES is likely parallelized on GPU etc, but might help readers to give a bit of detail say in figure caption?}
}
    \label{fig:integrator_task_train}
\end{figure}

We use the Morris--Lecar (ML) model of individual neurons, as in the previous example, but now employ multiple ML neurons  arranged in a recurrently connected network. The aim is for the network to integrate evidence and decide which of three possible input types occurs most often in its input stream (Figure~\ref{fig:integrator_task}.A.). At each timestep, an input  arrives as part of this stream:  this input is either of type 1, 2, or 3, or a ``null'' input which indicates nothing.  Each of these inputs enters the network as a different fixed random binary vector.  \EcommentDone{quick clarification -- is the null input also a random vector? or is just zeros?  please fix phrasing here as needed as I might have messed it up :); A: null is a random vector also, not zero.} 
We feed in a stream that is 50 inputs long for each training sample, with a new input arriving every 6 ms\EcommentDone{fill in xx}; after this, the network must output a decision of in favor of type 1, 2 or 3. To quantify this decision, the mean output of the network for the last 10 ms is integrated and the highest output is chosen. The output of the network is computed by feeding the hidden states through an affine transformation, $W_{out}$. In particular, the loss is given brequiringy
\begin{align}
    L := \int_0^\tau \| W_{out} T(V(t)) - T^* \| dt,
\end{align}
where $T^*$ is the one-hot encoded desired output (a vector in $\R^3$ for the ternary-integration problem) and $T$ is given by Equation \ref{eqn:outputdef}. 
\EcommentDone{Can we clarify the def of the output in the last couple of sentences?  eg is there a one-hot (3 unit) kind of output, or another setup?}  \EcommentDone{Likewise, here or below, would it make sense to write out loss more explicitly?  seems would give us some symmetry with the single neuron turn off case above ...}

The hidden state is the voltage of each neuron $V(t)$. To implement coupling between neurons, we directly feed the recurrent weighted output at time $t$, 
\[z(t) = W_{rec} T(V(t)),\]
back into the network as an input current, i.e. adding $z(t)$ to $\frac{d V}{dt}$ in Equation \ref{eqn:morrislecar}, along with the external current, $I_{app}$, which we fix in this case. We note that other options are available: for example, one could indirectly feed the input through an additional ODE or filter variable influenced by $z(t)$, representing synaptic dynamics. Adding this extra ODE with gradual synaptic dynamics should not affect the considered BP variants or ES, so in this work we opted for the form above.  
To efficiently simulate neurons, we used a two-dimensional implicit trapezoidal rule for numerical stepping (see Section \ref{sec:methodsbnns} for details).


\subsubsection{Evolutionary Strategies are More Robust than BP for Training Biophysical Network}

We now evaluate the performance of backpropagation vs parameter sampling-based evolutionary learning strategies (BP vs ES) in training the network of biophysical neurons to solve this task.  In brief, we will show that while there are situations in which BP is effective, ES is consistently more robust and always gives comparable results to BP.  We will illustrate how BP converges successfully and can be more efficient than ES when (1) the timeframe of evidence integration is short and (2) the dynamics of each neuron are \textit{non-explosive} in the way we define below.  However, away from this regime, BP is quick to break down;  ES offer a more robust convergence in each situation.
\vspace{-0.5cm}
\begin{enumerate}
    \item[\textbf{Case A} Non-explosive short timeframe:] First, we discuss the situation where BP works effectively, as in Figure \ref{fig:integrator_task_train}.A. We use 128 hidden neurons, a 300 ms timeframe, and \textit{non-explosive} network parameters: i.e. the parameters $\phi, V_2$ and $V_4$ in Equation \ref{eqn:morrislecar} are chosen so that spikes are less frequent, sharp, and short (specific details can be found in Section \ref{sec:methodsbnns}). We show plots of task accuracy vs. training epoch averaged over 5 runs using BP and ES in Figure \ref{fig:integrator_task_train}.A.  The ADAM optimizer was used to follow loss gradients in both cases. BP converges comparably fast as ES, and both achieve a similar final accuracy greater than 95\%. 

    \item[\textbf{Case B}] \textbf{Explosive-parameters:} When we modify the parameters mentioned so that the neurons have sharper spikes and spikes occur more rapidly and over shorter durations,  
    the sharp transition to spiking and high spike rate\EcommentDone{can we clarify:  is this sharp T(V), or high frequency of spiking} causes the ``gradient explosion problem'' mentioned above (Section \ref{sec:pitfalls}) \EcommentDone{refer more specifically to material above?}, in which training becomes unstable. As mentioned there, sharp spikes induce ``almost'' discontinuities that render the neural ODEs approach for backpropagation non-viable and causes BPTT to give explosive or nonsensical derivatives. This said, the ADAM optimizer is built to accommodate complex loss structures due to mechanisms such as momentum that are able to adjust the learning rate based on relative gradient scaling, hence the networks does eventually train with BPTT, albeit much slower than with the ES approach (Figure \ref{fig:integrator_task_train}.B). Problems due to explosiveness of neural parameters are expected with complex neuron models which can exhibit a variety of different timescales (e.g., refractory periods vs spikes) manifested through internal dynamics.  
    
    \item[\textbf{Case C}] \textbf{Long timeframe:} Another reasonable case we considered was that of long timeframes. In particular, non-explosive parameters $\phi, V_2$ and $V_4$ were chosen to model individual neurons, but longer timeframe of 1 second instead of 300 ms \EcommentDone{was it 300 ms above?}was used for the integration task (with 50 evidence inputs). In this case, BP (through BPTT or pNDEs) becomes infeasible\EcommentDone{specify the variant of BP -- do we mean BPTT and fNDE?}, since the memory consumption was too great (over 25 GB) to store the full forward pass evaluation. Furthermore using the ``memory free'' fNDE approach is not viable due to the ODE irreversibility problem mentioned above (Section \ref{sec:pitfalls}). A relatively large $\Delta t$ of 0.1 ms was used, enabled by a stable leapfrogging method (see Section \ref{sec:methodsbnns}), but 10,000 timesteps are still required with 4 state variables per neuron \EcommentDone{fill in :)} to capture the full forward pass.
    

    In contrast, ES only requires the final output of the network to guide parameter updates, so longer timeframes have no effect on the memory consumption. Figure \ref{fig:integrator_task_train}.C shows that they converge in this case.  
    \EcommentDone{can we give some more details in last sentence, or remove it?} These second(s)-scale time periods of evidence integration match the timeframes probed in some experimental settings, and certainly within the domain of natural behavior.  Moreover, similarly long timeframes readily arise in other task settings, from modeling more complex behaviors to matching neuronal spike recordings \EcommentDone{maybe cite a paper for the latter} \citep{pei2022neural}. 
    
\end{enumerate}

In summary, we find ES can be effective as an alternative to BP for gradient estimation and training in the evidence integration task.  In particular, ES is robust in the presence of rapid or ``explosive'' dynamics in individual neuron models (which occur in various experimental contexts), as well as long task timeframes.

\subsection{Broader Applicability of ES to Other Neural ODEs}

In the next two sections, we focus on the more broad context of neural ODEs (also referred to as continuous time neural networks) and the applicability of ES. We refer to neural ODEs as neural networks where, instead of a sequence of discrete time steps, results are produced by an ODE solver, with potentially adaptive step size. In other words, the neural network $NN$ describes an autonomous ODE:
\begin{align}
    \frac{\text{d}}{\text{d} t} x = NN(x),
\end{align}
and the final state is given by adaptively stepping through time using the neural network. \EcommentDone{Suggest that we add a sentence or two explaining what the neural ODEs context here is ... i.e., different from does each individual neuron following a more complex ODE model, have vanilla single neuron models that together produce ODE output, if that's right, etc ... } BNNs are therefore a sub-class of neural ODEs. Inspired by the success of ES for the BNNs, we aim to investigate their applicability for more general neural ODE problems. To the best of our knowledge, evolutionary algorithms have not been substantially analyzed and applied in this context as an alternative to backpropgation. Naturally, methods like finite differences have been applied and have similar advantages as ES poses above, including stability (e.g., no irreversibility concerns) and robustness over BP. 

\subsubsection{ES are Effective for Training Over-Parameterized Neural ODE}

\label{neuralodes}



\begin{figure}[t]
    \centering
    \includegraphics[width=\textwidth]{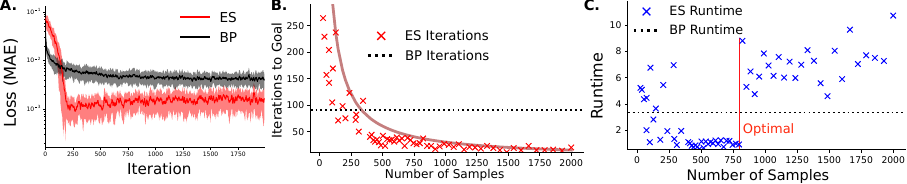}
    \caption{\textbf{ES solves a two-variable neural ODE flow problem and demonstrates effective sampling in large parameter spaces.} A network with 2 inputs and outputs and variable number of hidden neurons is used. \textbf{A} Loss comparison between our method (ES) and BPTT over 7 runs. Shaded regions show standard deviation. 150 parameter samples are used for ES training with 1000 hidden neurons. \textbf{B} Measurement of number of iterations ES takes to reach a final BP accuracy of 0.04 MAE (mean absolute error). The black dotted line shows the number of iterations for BP. Each point is the best result with a varied sample standard deviation for ES. \textbf{C} Same as B but showing effective implementation runtime to converge.}
    \label{fig:nde_acc}
    \vspace{-0.5cm}
\end{figure}

In this section, we apply ES to a non-stiff neural ODEs problem using a continuous-time neural network with a single hidden layer. The task involves fitting a two-dimensional ODE.  Empirically, this task can be effectively solved using as few as 50 neurons in the hidden layer with BP 
.  First, we confirm that, in the same setting with 50 hidden neurons, ES achieves essentially the same final loss as BP and with similar or faster convergence rates (see Appendix \ref{detailssims}). 

We then increased the number of hidden neurons to 1000 and repeated network training with both ES and BP.  While it may appear that increasing the number of hidden neurons beyond the 50 is not required for the task, this experimental setup allows us to demonstrate how ES performs in such an over-parameterized setting, giving insights into how this approach scales with number of network parameters. Surprisingly, we find that ES can efficiently train a network with 2,000 parameters using as few as 20 samples (Figure \ref{fig:nde_acc}.B,C), which contrasts with BP which scales linearly with parameter count. 

Taken together, these results indicate that ES can train the network comparably to BP with a low number of parameters and can outperform BP when using a larger number of parameters (Figure \ref{fig:nde_acc}.A-C). This observation suggests that using the smoothed loss might become increasingly advantageous in large network settings; a mechanism for this could be the smoothed loss ameliorating especially high levels of variability resulting from small changes to parameters in these settings.

To further quantify the scalability of our method, we illustrate convergence to the final achieved loss for both BP and for ES with different numbers of samples.  We demonstrate this both for number of iterations required (Figure \ref{fig:nde_acc}.B) and effective runtime (Figure \ref{fig:nde_acc}.C). The latter plot shows that there is an optimal choice of sample numbers for ES which, in this implementation, achieve a faster overall runtime than BP. Notably, the evaluation is done on the GPU in parallel such that using more samples is not necessarily significantly slower. Overall, these results demonstrates that ES can converge efficiently with a relatively small number of samples for non-stiff neural ODEs.  

\subsubsection{ES on Stiff Neural ODE Task} 
\label{stiff}

\begin{wrapfigure}{r}{0.4\textwidth}
\centering
\begin{minipage}[t]{0.4\textwidth}\centering

\vspace{-0.5cm}
\begin{figure}[H]
    \includegraphics[width=\textwidth]{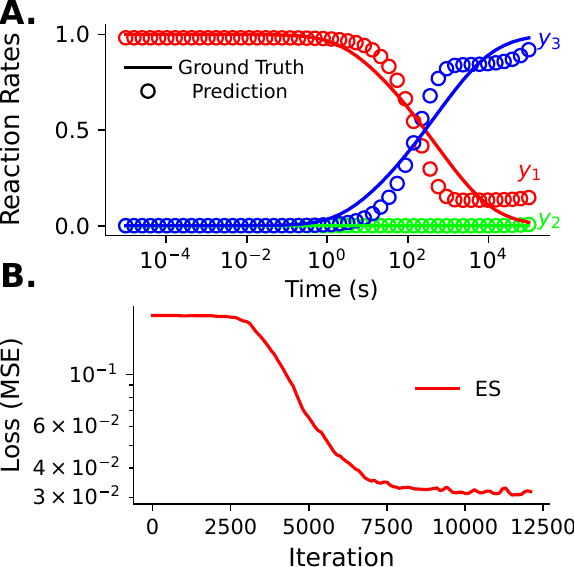}
    \caption{\textbf{Application of ES to the ROBER reaction network.} \textbf{A}  Chemical reaction rates, $y_1, y_2, y_3$, are simulated for exponentially long timespans. The dotted lines represent the ES predictions. \textbf{B} Illustrates the loss over training iterations with ES.}
    \vspace{-1.5cm}
    \label{fig:rober}
\end{figure}

\end{minipage}
\end{wrapfigure}

We next applied ES to a more common stiff problem, the Robertson problem (ROBER) \citep{robertson1967}. ROBER models a stiff system of three nonlinear ordinary differential equations describing the kinetics of an autocatalytic reaction network. ROBER is used extensively as a prototypical example of a stiff ODE since it is simple to describe but hard to solve, requiring implicit methods and adaptive timestepping \citep{robertson1967}. ROBER is explicitly modelled by a system of three variables, $y_1, y_2, y_3$, with dynamics:
\begin{align*}
    \frac{\text{d}}{\text{dt}} \begin{pmatrix} y_1 \\ y_2 \\ y_3 \end{pmatrix} = \begin{pmatrix} -k_1 y_1 + k_3 y_2 y_3 \\ k_1 y_1 - k_2 y_2^2 - k_3 y_2 y_3 \\ k_2 y_2^2 \end{pmatrix}.
\end{align*}

The constants $k_1, k_2, k_3$ in the specific ROBER test are set to $0.04, 3 \cdot 10^7, 10^4$, respectively. The large difference in scales between these constants causes the dynamics to be highly stiff. Furthermore, the solutions to the ROBER problem are evaluated over a timescale with both varying precision and length, ranging from time $10^{-4}$ to $10^4$ (see Figure \ref{fig:rober}), requiring the ODE solver to use highly varied step sizes.

The goal of ROBER is to reproduce the three trajectories observed with the dynamics above over the exponential timescale using a continuous time neural network. Recently, it has been noted that BP directly applied to ROBER can be unstable and result in poor learning \citep{kim2021stiff}. 
Approaches to stabilize BP on this problem by accounting for the different scales of the problem dynamics have been proposed and have been very successful \citep{kim2021stiff, yingbo2021comparison}.

Here, we present an alternative, which is the ES methodology applied to the ROBER problem. 
As in \citep{kim2021stiff}, we train a network with six hidden layers with 5 hidden neurons. Learned results are summarized in Figure~\ref{fig:rober}. Note that while final prediction accuracy for ES have room for improvement, they exceed those that are found with BP through neural ODEs \citep{kim2021stiff}. These results demonstrate that ES may be of value for solving stiff neural ODE problems. Stability of ES is likely due to the fact that it does not require temporal propagation of gradients. In particular, a direct application of BP to the ROBER problem requires computing adjoint and gradient dynamics backwards in time, which introduces high stiffness and instability.

\section{Discussion}

\subsection{Future Work}
A natural next step is to investigate the use of ES in training biological neural networks to solve a broader variety of tasks, for example other working memory, cognition, or visual perception tasks. If ES succeeds in enabling efficient training of such networks in these tasks, it could open the door to new optimization-based methods for dealing with how components of biological circuits can combine to subserve computations.  
For example, one could include multiple neuron types (e.g., neurons with different spiking and bursting properties) and study how these neurons may play different roles in the network's overall computation. This could build on interesting work with, for example, excitatory/inhibitory cells with different timescales \citep{miller2020generalized, wang2002}. \EcommentDone{there is a paper i think from ken miller lab at Columbia that trains networks with separate e and i cell types we could also cite ...}



Future directions also include expanding the application of ES to more abstract questions in neural network learning.  Beyond steps toward BNNs, we have additionally demonstrated the versatility of ES by applying it successfully to both stiff and non-stiff neural ODE problems, which indicates that application to many more dynamical systems is possible. Moreover, considering the recent advancements in theoretical tools for studying deep learning generalization~\cite{jiang2020neurips,jacot2018neural,advani2020high,petzka2021relative,pezeshki2021gradient,baratin2021implicit,tsuzuku2020normalized,dziugaite2017computing,zhou2020towards,xie2020diffusion,liu2022beyond,ghosh2023gradient}, an intriguing future direction would be to employ these tools to explore the generalization properties of ES. Inspired by recent studies that illustrate how noise and perturbations~\cite{yu2022robust,orvieto2022anticorrelated,wu2020adversarial,dapello2021neural,zhou2019toward,liu2022beyond} and particularly how stochastic gradient descent operates on implicitly convolved loss~\cite{kleinberg2018alternative,haruki2019gradient}  enhances generalization, the generalization of ES can be considered.

\EcommentDone{Here is some material from above, that it seemed to me best to work into future work / discussion material ... above ... want to take a crack?}

Our work suggests that future research should either aim to provide methods to facilitate training with BP in the cases mentioned (networks with abrupt ODE dynamics), or focus on improving evolutionary algorithms as an alternative. We believe the latter avenue may be promising in the future as dropping the BP requirement gives much freedom, allowing researchers to explore pathology (explosive) neural models or complex loss functions that may not be sufficiently well behaved for BP. Another major advantage to removing the need for BP is that it allows the network to be treated as a black box, so that efficient or standard packages such as \textit{Neuron} and \textit{Brian} could be used for simulation \citep{hines2001neuron, stimberg2019brian2} without the need to implement explicit differentiation or other techniques to compute gradients directly.  




\subsection{Summary and Conclusion}

In this study, our primary aim is to investigate the applicability of evolutionary approaches for training biological neural networks (BNNs), contrasting them to backpropagation (through the neural ODEs approach and BPTT). Enabling efficient training of these models could be valuable to experimental and computational neuroscientists since it could provide new insights into the emergent role different neurons can play in networks and how different neuron models can be combined together to form circuitry well suited to solve specific problems. We find that BNNs have a number of properties that make direct BP difficult to use: spiking and voltage oscillations can be very abrupt, causing gradient exploding, long timeframes can make memory consumption too high, loss landscapes can be noisy, requiring smoothing or many stochastic trials to resolve well, and, finally, the underlying ODEs are hard to simulate backwards in time, even for very short time intervals (around 10 ms in the present setting). We show that ES, by reformulating the supervised learning task as minimization of a ``smoothed loss'' in the parameter space (see Methods), can successfully compute gradients through a forward-pass only Monte-Carlo estimation, circumventing stability issues often encountered by BP. We provide evidence of the efficacy of ES across several scenarios, including those where BP is inapplicable. Specifically, we demonstrate that ES can effectively train a single Morris--Lecar neuron (Figure~\ref{fig:landscape-train}), despite the jagged loss landscape and unstable gradient that impede BP. Furthermore, we showed that ES can match or outperform BP in training Morris--Lecar neurons to solve a discrete integrator task, a well-known task in neuroscience (Figure~\ref{fig:integrator_task}). These two use cases could be used in the future to evaluate other methodologies for training networks of biophysical neurons. For example, in this work they demonstrate sources of of issues with BP, including exploding gradients and irreversibility of dynamics.

Lastly, we extended the applicability of ES beyond BNNs to stiff or non-stiff neural ODE problems, with comparable and in some settings improved results relative to BP (Figures~\ref{fig:nde_acc} and~\ref{fig:rober}). 
This builds on a growing body of literature that suggests the applicability of ES for overparameterized or problems with difficult ``loss landscapes'' as a viable alternative to BP.
Taken together, our results underscore the potential of ES as a versatile tool for training networks that exhibit high stiffness, noisiness, non-differentiability, or irreversibility for short timeframes, where traditional BP faces significant challenges. 

\subsection{Acknowledgements} 
We acknowledge the following sources of funding and support: the NIH BRAIN initiative grant 1RF1DA055669 (JH,ESB), National Science Foundation grant CRCNS IIS-2113003 (JH,ES), National Science Foundation grant EFMA-2223495 (JH,ES), and the Departments of Applied Mathematics and Electrical and Computer Engineering at the University of Washington. James Hazelden is an NSF Graduate Fellow

We thank Kyle Aitken, Stefan Mihalas and Alexander Hsu for their insights and comments on this research.

\subsubsection*{Code Availability}

Our code is available upon request.

\bibliography{ref_main}

\section{Appendix}

\subsection{Single Neuron Problem}

\begin{figure}[t]
    \centering
    \includegraphics[width=0.9\textwidth]{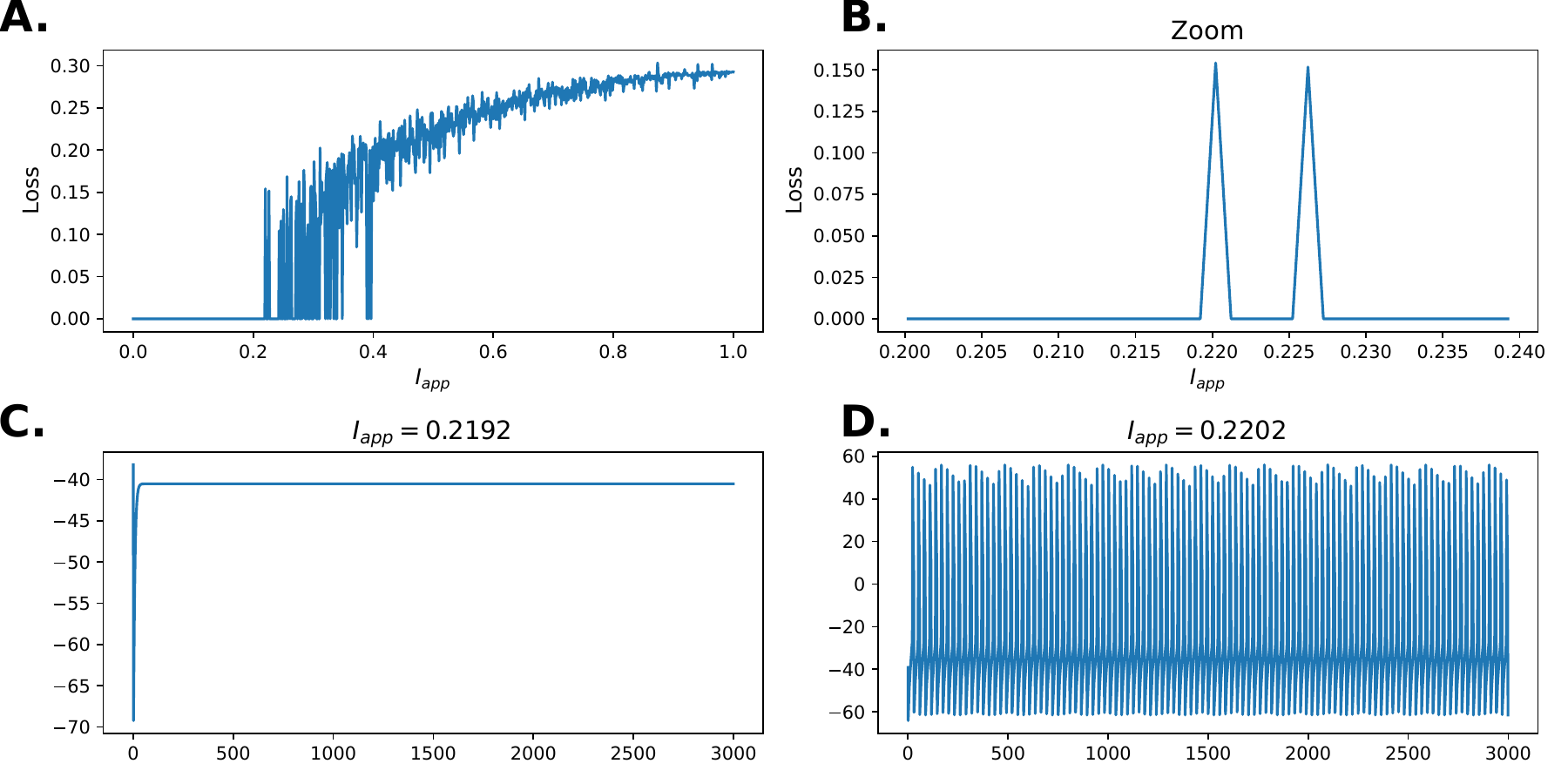}
    \caption{Explanation of random spikes in loss landscape for ``turn off'' single neuron task. \textbf{A} demonstrates the same loss landscape as in Figure \ref{fig:landscape-train}.A. \textbf{B} shows a zoom in where ``spikes'' in the loss can be seen. Under further analysis of the dynamics in the case with low loss and with high loss \textbf{C} and \textbf{D} respectively, we see that in the former case the neuron is not spiking, while in the latter it is constantly spiking for the 3000 ms timeframe. Such behavior is explainable by the bifurcation dynamics of this neuron model where they transition discontinuously from no spiking to constant spiking. Note that one way to alleviate this would be to run many randomized trials, which is a natural feature of evolutionary approaches.} 
    \label{fig:suppzoom}
\end{figure}

\label{sec:methodsbnns}

The following Morris--Lecar neuron physiological parameters are used:
    
\begin{table}[H]
    \centering

    \begin{tabular}{c|c}
         \textbf{Parameter} & \textbf{Value} \\
         $\phi$ & $0.04$ \\ 
         $V_1$ & -1.2 mV \\
         $V_2$ & 18 mV \\ 
         $V_3$ & 2 mV \\
         $V_4$ & 30 mV \\
         $C$ & 20
    \end{tabular}
\end{table}

For explosive parameters, we modify $\phi$ by increasing it ($0.4$ was used, i.e. $10$ times faster dynamics of the gating variable $w$). We can decrease $V_2$ or $V_4$ to get a similar net effect of faster neuronal dynamics. 

\textbf{Numerical Method:}

We use a trapezoidal/leapfrog scheme to efficiently solve the ODE with a fixed step size that can be quite large (~0.1 ms). The scheme has been previously used for similarly stiff problems and Hodgkin--Huxley-like neurons \cite{stinchcombe2017multiplexing}. 

In particular, the model can be written in the simple form 
\begin{align*}
C \frac{{dV}}{{dt}} &= G V - E, \\
\frac{{dw}}{{dt}} &= \phi \cdot \frac{{w_{\infty}(V) - w}}{{\tau_w(V)}},
\end{align*}
where 
\begin{align*}
    G &= g_L + g_{Ca} m_\infty + g_K w, \\
    E &= g_L V_L + g_{Ca} V_{Ca} m_\infty + g_K V_K w.
\end{align*}
The method works by assuming $w$ is updated on the ``half timestep'' and $V$ on the ``full timestep.'' I.e., $V$ is updates at the times $0, \Delta t, 2 \Delta t, ...$ and $w$ is updated at the times $\Delta t/2, 3 \Delta t/2, 5 \Delta t/2, ...$ and both are treated as constant on the others' respective update time. We formulate the ODE update using an implicit equations, but, using this half-stepping scheme, it can made into an explicit update rule.

Let $V_1 = V(t + \Delta t), V_0 = V(t)$. We treat $w$ as constant during the update, as mentioned, since the gating variable updates on the half timestep. Then, if we apply the Trapezoidal rule and the fundamental theorem to approximately integrate both sides of the $V$ update above, we get 
\[C (V_1 - V_0) = \frac{\Delta t}{2} G (V_1 + V_0) - E.\]
Hence, solving for $V_2$, the value at the next timestep of the voltage, 
\[V_1 = \frac{V_0 (\frac{\Delta t}{2}G + C) - E}{C - \frac{\Delta t}{2} G}. \]
Thus, we can use an explicit formula to update but have the added stability of an implicit method, which in practice allows for much large timesteps \cite{stinchcombe2017multiplexing}. 

We likewise do the same to update $w$ on the half timestep. Let $w_1 = w(t + \Delta t + \Delta t /2), w_0 = w(t + \Delta t /2)$. Then, integrating as above, 
\[w_1 - w_0 = \phi \cdot \frac{w_\infty - \frac{\Delta t}{2} (w_1 + w_0)}{\tau_w}.\]
Hence, 
\[w_1 = \frac{\phi w_\infty + w_0 (1 - \phi \frac{\Delta t}{2})}{\tau_w (1 + \frac{\Delta t}{2} \phi)}\]

\subsection{Spiral Neural ODE}

\label{detailssims}

\begin{figure}[H]
    \centering
    \includegraphics[width=0.5\textwidth]{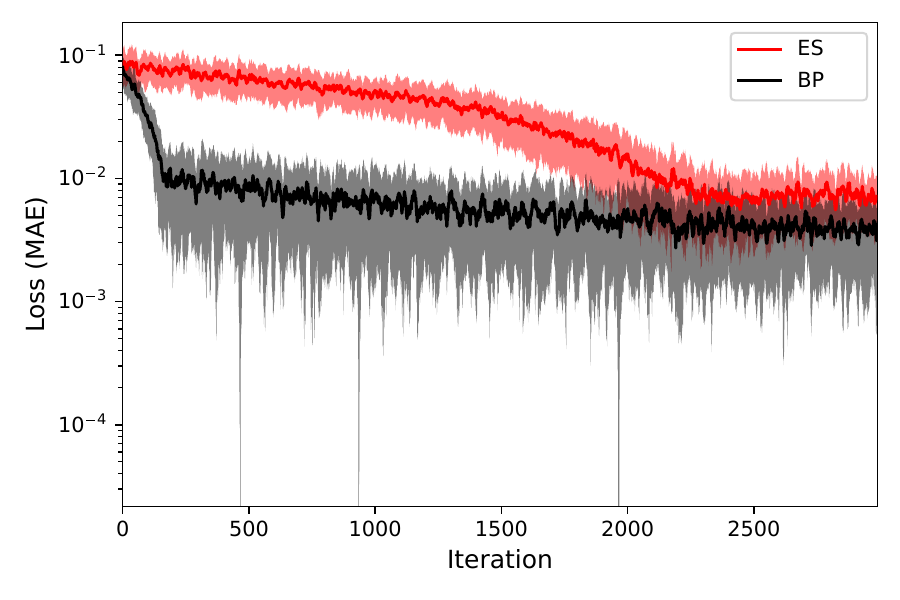}
    \caption{Training comparison for NDEs problem with 50 hidden neurons. 5 training runs were performed with randomized initializations of the network and random batches. The standard deviation $\sigma$ for ES sampling was set to 0.1.}
    \label{fig:nde_acc_50}
\end{figure}

Minimal hyperparameter sweeping was performed on the learning rate when using BP to obtain the fastest convergence. Once this learning rate was obtained (1e-3), it was used idenitcally for each simulation using ES. This method was choosing because it constrains ES to perform well under the same condition as BP, without additionally sweeping the learning rate of ES in each case.

\begin{figure}[H]
    \centering
    \includegraphics[width=0.5\textwidth]{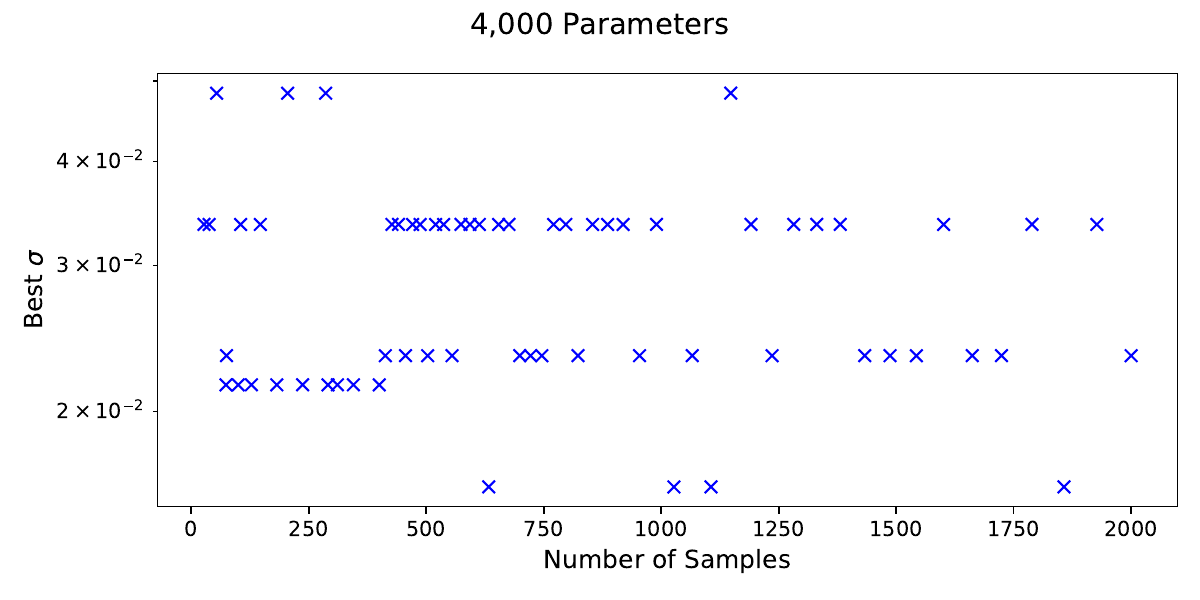}
    \caption{Choice of standard deviation for sampling with ES for each value of $S$ in Figure \ref{fig:nde_acc}.B-C.}
    \label{fig:std_choice}
\end{figure}

For each simulation result in \ref{fig:nde_acc}, ES was used with the normal distribution $p_\theta = \mathcal{N}(\theta; \sigma)$. The standard deviation $\sigma$ is a scalar and was chosen as a hyperparameter that was individually swept in each case, with the fixed learning rate above. The choices of $\sigma$ in each case with fastest convergence to the goal loss (0.04 MAE) in Figure \ref{fig:nde_acc}.B-C are shown in Figure \ref{fig:std_choice} above. We note that there is not a clear trend, but all variances have around the same scale ($10^{-2}$) even though exponential sweeping values from $10^{-1}$ to $10^{-4}$ were considered. To solve the system, the dopri5 method was used in torchdiffeq with \textit{tol} and \textit{atol} set to 1e-4.

\end{document}